\newif\ifcolor
\definecolor{theorem}{named}{NavyBlue}
\definecolor{proof}{named}{ForestGreen}
\definecolor{edit}{named}{WildStrawberry}
\theoremstyle{plain}
\newtheorem{theorem}{Theorem}
\newtheorem{lemma}{Lemma}
\newtheorem{corollary}{Corollary}
\theoremstyle{definition}
\newtheorem{definition}{Definition}
\newtheorem{hypothesis}{Assumption}
\theoremstyle{remark}
\xpatchcmd{\proof}{\itshape}{\normalfont\proofnamefont}{}{}
\newcommand{\proofnamefont}{\bfseries}
\def\BibTeX{{\rm B\kern-.05em{\sc i\kern-.025em b}\kern-.08em
    T\kern-.1667em\lower.7ex\hbox{E}\kern-.125emX}}
\newif\iflong
\newcommand{\R}{\mathbb{R}}
\newcommand{\C}{\mathcal{C}}
\DeclareMathOperator*{\E}{\mathbf{E}}
\newcommand{\Exp}[2][]
    {\ifthenelse{\isempty{#1}}{\E\left[#2\right]}{\E_{#1}\left[#2\right]}}
\DeclareMathOperator*{\V}{\mathbf{Var}}
\newcommand{\Var}[2][]
    {\ifthenelse{\isempty{#1}}{\V\left[#2\right]}{\V_{#1}\left[#2\right]}}
\renewcommand{\vec}{\mathbf}
\renewcommand{\hat}{\widehat}
\renewcommand{\tilde}{\widetilde}
\renewcommand{\sim}{\thicksim}
\let\originalleft\left
\let\originalright\right
\renewcommand{\left}{\mathopen{}\mathclose\bgroup\originalleft}
\renewcommand{\right}{\aftergroup\egroup\originalright}
\newcommand{\mdots}{,\,\dots\, ,}
\newcommand{\X}{\vec{X}}
\newcommand{\Y}{\vec{Y}}
\newcommand{\Z}{\vec{Z}}
\newcommand{\x}{\vec{x}}
\newcommand{\y}{\vec{y}}
\newcommand{\vt}{\vec{t}}
\newcommand{\z}{\vec{z}}
\newcommand{\xa}{\x_1}
\newcommand{\xb}{\x_2}
\newcommand{\Xa}{\X_1}
\newcommand{\Xb}{\X_2}
\newcommand{\Ya}{\Y_1}
\newcommand{\Yb}{\Y_2}
\newcommand{\fXb}{f_{\Xb}}
\newcommand{\fYa}{f_{\Ya}}
\newcommand{\fZ}{f_{\Z}}
\newcommand{\fz}{f_{\z}}
\newcommand{\boolfunc}[1]{{#1:\{-1,1\}^n\to\{-1,1\}}}
\newcommand{\rboolfunc}[1]{{#1:\{-1,1\}^n\to\R}}
\newcommand{\infl}[2]{\operatorname{\mathbf{Inf}}_{#1}[#2]}
\newcommand{\Ef}{\mathrm{E}_t\fZ}
\newcommand{\Df}{\mathrm{D}_t\fZ}
\newcommand{\Lf}{\mathrm{L}_t\fZ}
\newcommand{\eg}{e.g.,~}
\newcommand{\ie}{i.e.,~}
\title{A Bivariate Invariance Principle\\
\thanks{This work was supported by MIT Welcome Trust Fellowship 2389724,
Lincoln Lab Grant 6944494, MIT Portugal Grant 6942770, and NFS-CNS Grant
6944400.}
}
\author{
\IEEEauthorblockN{
Alexander Mariona\IEEEauthorrefmark{1},
Homa Esfahanizadeh\IEEEauthorrefmark{1},
Rafael G.~L.~D'Oliveira\IEEEauthorrefmark{2}, and
Muriel M\'{e}dard\IEEEauthorrefmark{1}}

\IEEEauthorblockA{
\IEEEauthorrefmark{1}
Research Laboratory of Electronics,
Massachusetts Institute of Technology, Cambridge, MA 02139}

\IEEEauthorblockA{
\IEEEauthorrefmark{2}
School of Mathematical and Statistical Sciences,
Clemson University, Clemson, SC 29634\\
Email: amariona@mit.edu, homaesf@mit.edu, rdolive@clemson.edu, medard@mit.edu}
}
\begin{document}

\maketitle

\begin{abstract}
A notable result from analysis of Boolean functions is the Basic Invariance
Principle (BIP), a quantitative nonlinear generalization of the Central Limit
Theorem for multilinear polynomials. We present a generalization of the BIP for
bivariate multilinear polynomials, \ie polynomials over two
$n$\nobreakdash-length sequences of random variables. This bivariate invariance
principle arises from an iterative application of the BIP to bound the error in
replacing each of the two input sequences. In order to prove this invariance
principle, we first derive a version of the BIP for random multilinear
polynomials, \ie polynomials whose coefficients are random variables.  As a
benchmark, we also state a naive bivariate invariance principle which treats
the two input sequences as one and directly applies the BIP. Neither principle
is universally stronger than the other, but we do show that for a notable class
of bivariate functions, which we term separable functions, our subtler
principle is exponentially tighter than the naive benchmark.
\end{abstract}

\begin{IEEEkeywords}
Basic Invariance Principle, Boolean functions, functional approximation
\end{IEEEkeywords}

\iflong
\else
\emph{An extended version of this paper including all proofs can be read
at} \url{https://arxiv.org/abs/2208.04977}.
\fi

\section{Introduction}\label{sec:intro}

Boolean functions are ubiquitous in the fields of complexity theory
\cite{Complexity1, Complexity2}, cryptography \cite{Crypto1, Crypto2}, social
choice theory \cite{Social1, Social2}, and digital electronics \cite{Digital1,
Digital2}.  One particularly significant result from the field of analysis of
Boolean functions is the Basic Invariance Principle (BIP) \cite{BIP}. The BIP
is a nonlinear generalization of the Berry-Esseen Theorem \cite{Berry, Esseen},
which is in turn a quantitative version of the Central Limit Theorem. The
Berry-Esseen Theorem provides an explicit bound on the difference between the
distribution of a finite sum of independent random variables and a standard
Gaussian distribution. The BIP, given a multilinear polynomial and two
differently distributed sequences of independent random variables $\X=(X_1
\mdots X_n)$ and $\Y=(Y_1 \mdots Y_n)$, bounds the expected difference between
$f(\X)$ and $f(\Y)$. This difference can be interpreted as the expected error
incurred by approximating $f(\X)$ as $f(\Y)$.

In order for the bound given by the BIP to be close to 0, the function under
consideration must have relatively low influences. The influence of a
coordinate on a Boolean function quantifies how sensitive the output is to a
change in that particular input coordinate, and the same concept can be
generalized to multilinear polynomials. The notion of influence originated in
social choice and voting theory \cite{Influence}.  Qualitatively, the BIP
states that low-influence functions are invariant to the distribution of the
input sequence.  One notable application of the BIP is to ``replace bits by
Gaussians:'' whether the input is a sequence of uniform random bits or a
sequence of standard Gaussians, the expected output of a low-influence function
does not change too much.

One natural generalization of the BIP would be an invariance principle which
treats functions of two sequences of random variables. Such bivariate functions
open up new options and ideas in applications involving two distinct data
sources, such as in multi-party communication networks,
e.g.,~\cite{MatchedFilter, CWC}. In the context of these models, some functions
are inherently bivariate, even if they could be equivalently written as
univariate functions. In those cases, a bivariate generalization of the BIP may
achieve a tighter bound by exploiting the bivariate structure of the function.

We present one such invariance principle which follows from iteratively
applying the BIP to bound the error in replacing the first input sequence and
then again to bound the error in replacing the second. In order to do so, we
treat the bivariate function as a univariate function with random coefficients
which are determined by the input sequence that is not being replaced at a
given step. To this end, we propose a variation of the BIP which can be applied
to such random functions. For the sake of comparison, we also derive a naive
bivariate invariance principle directly by treating the two input sequences as
a single sequence, effectively viewing the bivariate function as univariate. We
refer to our subtler invariance principle as BVIP-1 and to this naive benchmark
as BVIP-2. Neither principle is universally stronger than the other, but we do
offer one notable example of a family of functions for which BVIP\nobreakdash-1
is exponentially tighter: functions of the form $F(\x,\y)=f(\x)+g(\y)+h(\x\y)$,
which we term \emph{separable} functions. These functions are particularly
interesting because they generalize many different notions of noise that arise
in communication channels, including simple models like the binary symmetric
channel \cite{BSC, CWC}.

The remainder of this paper is organized as follows. In
Section~\ref{sec:prelims}, we summarize concepts and review key results from
analysis of Boolean functions. In Section~\ref{sec:random}, we consider
multilinear random polynomials and propose a version of the BIP for those
random functions in anticipation of Section~\ref{sec:subtle}, in which we
propose and prove BVIP\nobreakdash-1. In Section~\ref{sec:discussion}, we
compare BVIP-1 to the naive benchmark of BVIP-2, present corollary invariance
principles for the special case of separable functions, and offer concluding
thoughts.

\section{Preliminaries}\label{sec:prelims}

We denote random variables with uppercase letters, \eg$X$. We denote vectors
(often referred to as sequences in our context) with bold-faced letters,
\eg$\vec{x}$. Accordingly, vectors of random variables are denoted with
uppercase bold-faced letters, \eg$\vec{X}$. We denote the coordinates (or
elements) of vectors with indexed letters, \eg$x_i$. We sometimes specify the
coordinates of a vector like $\vec{x}=(x_1,x_2 \mdots x_n)$.
Multiplication of two vectors is performed elementwise and results in a new
vector, \ie$\vec{xy} = (x_1y_1,x_2y_2 \mdots x_ny_n)$.  We denote
the set containing the element $i$ with $S\ni i$. We denote the set
$\{1,2 \mdots n\}$ with $[n]$ and its power set with $2^{[n]}$.

\subsection{Results from Analysis of Boolean Functions}

We begin with Boolean functions $\boolfunc{f}$ and
real-valued Boolean functions $\rboolfunc{f}$. All results in the following
sections in fact hold for general multilinear polynomials $f:\R^n\to\R$ (the
domain is $\R^n\times\R^n=\R^{2n}$ in the bivariate case), but because many of
the key tools are defined in the context of Boolean functions, we briefly
discuss those functions here before generalizing. All definitions and results
in this section other than \thref{def:biv} are from \cite{BoolBook}.

\begin{theorem}
Every Boolean function $\rboolfunc{f}$ can be uniquely expressed as an
$n$-variate multilinear polynomial,
\begin{equation*}
    f(\vec{x}) = \sum_{S\subseteq [n]} \hat{f}(S) \prod_{i\in S} x_i.
\end{equation*}
\end{theorem}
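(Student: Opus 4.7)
The plan is to prove existence by explicit interpolation and uniqueness by dimension/orthogonality. For existence, since $\{-1,1\}^n$ has $2^n$ points, any function on it is determined by its values there, so it suffices to produce, for each $\y\in\{-1,1\}^n$, a multilinear polynomial $\delta_\y(\x)$ that equals $1$ when $\x=\y$ and $0$ at every other hypercube point; then $f(\x)=\sum_\y f(\y)\,\delta_\y(\x)$ is itself multilinear. The standard choice $\delta_\y(\x)=\prod_{i=1}^{n}\frac{1+x_i y_i}{2}$ works because $x_i y_i\in\{-1,1\}$, so each factor equals $1$ precisely when $x_i=y_i$. Distributing the product and regrouping by subsets $S\subseteq[n]$ yields the claimed form with the explicit formula $\hat f(S)=2^{-n}\sum_\y f(\y)\prod_{i\in S}y_i$.

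For uniqueness, I would establish linear independence of the $2^n$ monomials $\chi_S(\x):=\prod_{i\in S}x_i$ viewed as functions on the hypercube. The cleanest route is orthonormality under the uniform-measure inner product $\langle f,g\rangle:=\E[f(\X)g(\X)]$ with $\X$ uniform on $\{-1,1\}^n$: since $\chi_S\chi_T=\chi_{S\triangle T}$ and $\E[\chi_U]=\prod_{i\in U}\E[X_i]=0$ whenever $U\ne\emptyset$, we obtain $\langle\chi_S,\chi_T\rangle=\delta_{S,T}$. The ambient space $\R^{\{-1,1\}^n}$ of real-valued functions on the hypercube has dimension $2^n$, so $2^n$ orthonormal vectors form a basis, which forces the coefficients to equal $\hat f(S)=\langle f,\chi_S\rangle$ and so pins them down uniquely.

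The main subtlety to flag is not any calculation but the meaning of ``uniqueness.'' As formal polynomials in $\R[x_1\mdots x_n]$ the representation is \emph{not} unique: distinct formal polynomials can agree on the hypercube because $x_i^2=1$ holds there. The statement becomes correct only once one restricts to multilinear representatives, equivalently to the quotient by the ideal generated by $x_i^2-1$ for $i\in[n]$, which is a $2^n$-dimensional space matching the dimension of the function space. I would open the proof by making this reduction explicit so that both halves of the argument live inside the same $2^n$-dimensional vector space and the basis conclusion is immediate.
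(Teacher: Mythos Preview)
Your proof is correct and complete. The paper itself does not prove this theorem: it is stated in the preliminaries as a known result imported from \cite{BoolBook}, so there is no in-paper argument to compare against. That said, your approach---interpolation via the indicator polynomials $\delta_\y(\x)=\prod_i\frac{1+x_iy_i}{2}$ for existence, and orthonormality of the $\chi_S$ under the uniform inner product for uniqueness---is exactly the standard textbook proof (it is the one given in \cite{BoolBook}), and your remark about uniqueness holding only among multilinear representatives, equivalently in $\R[x_1,\dots,x_n]/(x_i^2-1)$, is the right caveat to make explicit.
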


This expression is called the \emph{Fourier expansion} of $f$ and is determined
by the \emph{Fourier coefficients} of $f$ on $S$ which are given by the
function $\hat{f}:2^{[n]}\to\R$.  Collectively, the coefficients of $f$ are
referred to as the \emph{Fourier spectrum} of $f$. When we refer to the
\emph{degree} of a Boolean function, we are referring to the degree of its
Fourier expansion.  Since every such expansion is multilinear, the degree $k$
of a Boolean function $f$ (or of any multilinear polynomial $f$) is
\begin{equation*}
    k = \max_{\hat{f}(S)\neq 0} \abs{S}.
\end{equation*}

An important property of a Boolean function is the \emph{influence} of each
coordinate of the input on the output of the function.  The influence of a
coordinate quantifies how likely a particular coordinate is to be
\emph{pivotal}. A coordinate $i$ is pivotal for a particular input $\x$ if
negating $x_i$ negates the output $f(\x)$.

\begin{definition}
The \emph{influence} of a coordinate $i$ on a function $\boolfunc{f}$ is
defined to be the probability that $i$ is pivotal for a random input drawn
uniformly:
\begin{equation*}
    \infl{i}{f} = \Pr_{\X\sim\{-1,1\}^n} \left[f(\X) \neq
        f(\X^{\oplus i})\right],
\end{equation*}
where ${\x^{\oplus i} = (x_1 \mdots x_{i-1},-x_i,x_{i+1} \mdots x_n)}$.
\end{definition}

Informally, if we consider $f$ to be a voting rule in a two-party election, the
influence of the $i$th coordinate can be thought of as the ``influence'' or
``power'' of the $i$th voter.  The influences of a real-valued Boolean function
can be defined in a more analytical fashion, but with a very similar meaning.
Such an approach leads to a relation between the influences and the Fourier
spectrum which we treat as a definition.

\begin{definition}
\label{def:inf}
For $\rboolfunc{f}$ and $i\in[n]$, the influence of coordinate $i$ on $f$ is
\begin{equation*}
    \infl{i}{f} = \sum_{S\ni i}\hat{f}(S)^2.
\end{equation*}
\end{definition}
We also use \thref{def:inf} for general multilinear polynomials, a choice
which is justified by \thref{lm:basis} below.\footnote{\thref{lm:basis} states
that Parseval's theorem holds for multilinear polynomials which are applied to
sequences satisfying \thref{hy:11.65}. It is because of this that we are
justified in using \thref{def:inf}. See \cite[ch.~8.2]{BoolBook} for more
detail.}

We now present a few statements in anticipation of the BIP. First, the BIP only
holds for sequences of random variables with well-behaved distributions. In
particular, we make the following assumption on each random variable in the two
sequences with which we are concerned.\footnote{A slightly different
form of the BIP holds for a looser set of assumptions. \thref{hy:11.65} is
a simpler hypothesis which keeps the bounds tidy and will suffice for our
purposes. See \cite[sec.~3.3]{BIP} for more detail.}

\begin{hypothesis}
\label{hy:11.65}
The random variable $X_i$ satifies $\Exp{X_i}=0$, $\Exp{X_i^2}=1$,
$\Exp{X_i^3}=0$, and $\Exp{X_i^4}\leq 9$.
\end{hypothesis}
Two examples of random variables satisfying \thref{hy:11.65} are a
uniform $\pm 1$ random bit and a standard Gaussian.

The following two lemmas are used to prove the BIP by the replacement method,
and we use them to similar effect in Section~\ref{sec:random}.
\thref{lm:bonami} is a simple hypercontractivity result.

\begin{lemma}[Bonami Lemma]
\label{lm:bonami}
Let $\X=(X_1 \mdots X_n)$ be a sequence of independent but not necessarily
identically distributed random variables satisfying $\Exp{X_i}=\Exp{X_i^3}=0$
and $\Exp{X^4}\leq 9\Exp{X^2}^2$. Let $f$ be a multilinear polynomial of degree
at most $k$. Then
\begin{equation*}
    \Exp{f(\X)^4}\leq 9^k\cdot\Exp{f(\X)^2}^2.
\end{equation*}
\end{lemma}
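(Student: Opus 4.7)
The plan is to prove the lemma by induction on $n$, the number of input variables. The base case $n=0$ is immediate: $f$ reduces to a constant $c$, so $k = 0$ and both sides equal $c^4$.

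For the inductive step, I would exploit multilinearity to write $f(\X) = X_n \cdot g(X_1 \mdots X_{n-1}) + h(X_1 \mdots X_{n-1})$, where $g$ has degree at most $k-1$ (the crucial degree reduction, since $g$ appears as the coefficient of $X_n$) while $h$ has degree at most $k$. I would then expand $f(\X)^2$ and $f(\X)^4$ as polynomials in $X_n$ and use independence of $X_n$ from the remaining variables together with $\Exp{X_n} = \Exp{X_n^3} = 0$ and $\Exp{X_n^4} \leq 9\Exp{X_n^2}^2$ to kill every odd-power cross term. What remains is $\Exp{f(\X)^2} = \Exp{X_n^2}\Exp{g^2} + \Exp{h^2}$ and $\Exp{f(\X)^4} \leq 9\Exp{X_n^2}^2\Exp{g^4} + 6\Exp{X_n^2}\Exp{g^2 h^2} + \Exp{h^4}$.

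Next I would bound the cross term by Cauchy--Schwarz, $\Exp{g^2 h^2} \leq \sqrt{\Exp{g^4}\Exp{h^4}}$, and invoke the inductive hypothesis to get $\Exp{g^4} \leq 9^{k-1}\Exp{g^2}^2$ and $\Exp{h^4} \leq 9^k \Exp{h^2}^2$. Writing $A = \Exp{X_n^2}\Exp{g^2}$ and $B = \Exp{h^2}$, the upper bound on $\Exp{f(\X)^4}$ collapses --- using $\sqrt{9^{k-1}\cdot 9^k} = 9^k/3$, so that the cross-term coefficient becomes $6/3 = 2$ --- to $9^k A^2 + 2\cdot 9^k A B + 9^k B^2 = 9^k(A+B)^2 = 9^k \Exp{f(\X)^2}^2$, closing the induction.

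I expect the only delicate step to be the constant-tracking in this final line: the factor $9$ from the fourth-moment hypothesis, the $9^{k-1}$ and $9^k$ from the two recursive calls, and the binomial coefficient $6$ from $(X_n g + h)^4$ must align with the $\sqrt{\cdot}$ from Cauchy--Schwarz to produce exactly the perfect square $9^k(A+B)^2$. This tight coupling between the hypothesis constant $9$ and the conclusion is really what the lemma encodes, and it is the one computation I would want to verify carefully rather than wave past.
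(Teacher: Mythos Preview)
Your argument is correct and is exactly the standard induction proof of the Bonami Lemma. Note, however, that the paper does not supply its own proof of this lemma: it is quoted as a preliminary result from \cite{BoolBook} and used as a black box in the proof of \thref{th:rbip}. The induction-on-$n$ argument you outline, with the decomposition $f = X_n g + h$, Cauchy--Schwarz on $\Exp{g^2h^2}$, and the perfect-square collapse $9^k(A+B)^2$, is precisely the proof given in that reference, so there is nothing to compare against within the paper itself.

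One tiny wording quibble: in the base case you write ``both sides equal $c^4$,'' but strictly speaking the statement only asserts $\deg f \le k$, so $k$ need not be $0$ when $f$ is constant; the inequality $c^4 \le 9^k c^4$ is of course still immediate.
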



\begin{lemma}
\label{lm:basis}
Let $f:\R^n\to\R$ be an $n$-variate multilinear polynomial over the
sequence of indeterminates $\x=(x_1 \mdots x_n)$,
\begin{equation*}
    f(\x) = \sum_{S\subseteq[n]}\hat{f}(S)\prod_{i\in S}x_i.
\end{equation*}
When considering a sequence of independent random variables ${\X=(X_1 \mdots
X_n)}$ with ${\Exp{X_i}=0}$ and ${\Exp{X_i^2}=1}$, the parity functions
${\chi_S=\prod_{i\in S} X_i}$ are orthonormal, and hence
\begin{equation*}
    \Exp{f(\X)^2} = \sum_{S\subseteq[n]} \hat{f}(S)^2.
\end{equation*}
\end{lemma}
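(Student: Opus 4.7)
The plan is to verify orthonormality of the parity functions $\chi_S(\X)=\prod_{i\in S}X_i$ directly from the moment assumptions, and then read off the second-moment identity by expanding $f(\X)^2$ and taking expectations term by term.

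First I would fix arbitrary $S,T\subseteq[n]$ and compute $\Exp{\chi_S(\X)\chi_T(\X)}=\Exp{\prod_{i\in S}X_i\prod_{j\in T}X_j}$. Grouping the product by coordinate, each index $i\in S\cap T$ contributes a factor $X_i^2$, each index $i\in S\triangle T$ contributes a factor $X_i$, and indices outside $S\cup T$ contribute nothing. Because the $X_i$ are independent, the expectation factors as
\begin{equation*}
    \Exp{\chi_S(\X)\chi_T(\X)} = \prod_{i\in S\cap T}\Exp{X_i^2}\prod_{i\in S\triangle T}\Exp{X_i}.
\end{equation*}
Using $\Exp{X_i}=0$ and $\Exp{X_i^2}=1$, this product equals $1$ when $S\triangle T=\emptyset$, i.e.\ when $S=T$, and $0$ otherwise. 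This establishes the claimed orthonormality of $\{\chi_S\}_{S\subseteq[n]}$ in the inner product $\langle g,h\rangle=\Exp{g(\X)h(\X)}$.

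Next I would expand the square of $f$ using its multilinear representation and apply the above. Writing $f(\X)=\sum_{S\subseteq[n]}\hat{f}(S)\chi_S(\X)$, we obtain
\begin{equation*}
    f(\X)^2 = \sum_{S,T\subseteq[n]}\hat{f}(S)\hat{f}(T)\,\chi_S(\X)\chi_T(\X).
\end{equation*}
Taking expectations, exchanging sum and expectation (the sum is finite), and substituting the orthonormality relation $\Exp{\chi_S(\X)\chi_T(\X)}=\mathbf{1}[S=T]$ collapses the double sum to the diagonal, yielding $\Exp{f(\X)^2}=\sum_{S\subseteq[n]}\hat{f}(S)^2$ as desired.

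There is no real obstacle here: the only place one must be careful is that the moment assumptions $\Exp{X_i}=0,\Exp{X_i^2}=1$ together with independence suffice to kill every off-diagonal cross term regardless of whether the $X_i$ are identically distributed, which is exactly what the multilinearity of each $\chi_S$ (no repeated factors within a single monomial) allows. This is the reason the lemma is stated for general sequences satisfying only the first two moment conditions of \thref{hy:11.65}, rather than requiring the full hypercontractive assumption used in \thref{lm:bonami}.
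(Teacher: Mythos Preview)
Your proof is correct and is the standard argument: verify orthonormality of the monomials $\chi_S$ by factoring the expectation over independent coordinates, then expand $f(\X)^2$ bilinearly and collapse to the diagonal. The paper does not actually supply a proof of this lemma; it is quoted as a preliminary result from \cite{BoolBook}, so there is nothing to compare against beyond noting that your argument is exactly the textbook one.
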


This leads us to the formal statement of the BIP.

\begin{theorem}[BIP]\label{th:bip}
Let $f:\R^n\to\R$ be an $n$-variate multilinear polynomial of degree at most
$k$. Let $\X$ and $\Y$ be $n$\nobreakdash-length sequences of independent
random variables satisfying \thref{hy:11.65}. Let $\psi:\R\to\R$ be
$\mathcal{C}^4$, \ie the derivatives $\psi' \mdots \psi''''$ exist and are
continuous, with $\norm{\psi''''}_\infty\leq C$. Then
\begin{equation*}
    \abs{\Exp{\psi(f(\X))} - \Exp{\psi(f(\Y))}} \leq
    \frac{C}{12}\cdot 9^k \cdot \sum_{t=1}^n \infl{t}{f}^2.
\end{equation*}
\end{theorem}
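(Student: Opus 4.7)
The plan is to use Lindeberg's replacement method, interpolating between $\X$ and $\Y$ one coordinate at a time. Define hybrid sequences $\Z^{(t)}=(Y_1\mdots Y_t,X_{t+1}\mdots X_n)$ for $t\in\{0\mdots n\}$, so $\Z^{(0)}=\X$ and $\Z^{(n)}=\Y$. The triangle inequality reduces the problem to bounding each of $n$ single-coordinate swaps:
\begin{equation*}
\abs{\Exp{\psi(f(\X))}-\Exp{\psi(f(\Y))}}\leq\sum_{t=1}^{n}\abs{\Exp{\psi(f(\Z^{(t-1)}))}-\Exp{\psi(f(\Z^{(t)}))}}.
\end{equation*}
For a fixed $t$, multilinearity of $f$ lets me decompose $f(\Z^{(t-1)})=E_t+X_tD_t$ and $f(\Z^{(t)})=E_t+Y_tD_t$, where $E_t=\sum_{S\not\ni t}\hat{f}(S)\prod_{i\in S}Z^{(t-1)}_i$ and $D_t=\sum_{S\ni t}\hat{f}(S)\prod_{i\in S\setminus\{t\}}Z^{(t-1)}_i$ depend only on the coordinates of $\Z^{(t-1)}$ other than the $t$-th; thus both $E_t$ and $D_t$ are independent of $X_t$ and of $Y_t$.

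Taylor-expanding $\psi$ around $E_t$ to fourth order and taking expectations, the contribution at order $j\in\{0,1,2,3\}$ is $\frac{1}{j!}\Exp{X_t^j}\cdot\Exp{\psi^{(j)}(E_t)D_t^j}$; by \thref{hy:11.65}, $\Exp{X_t^j}=\Exp{Y_t^j}$ for each such $j$, so every term through order 3 cancels when I subtract the analogous expansion at $\Y$. Only the fourth-order Taylor remainders survive. Using $\norm{\psi''''}_\infty\leq C$ together with independence and $\Exp{X_t^4},\Exp{Y_t^4}\leq 9$,
\begin{equation*}
\abs{\Exp{\psi(f(\Z^{(t-1)}))}-\Exp{\psi(f(\Z^{(t)}))}}\leq\frac{C}{24}\bigl(\Exp{X_t^4}+\Exp{Y_t^4}\bigr)\Exp{D_t^4}\leq\frac{3C}{4}\Exp{D_t^4}.
\end{equation*}
Since $D_t$ is multilinear of degree at most $k-1$ in coordinates all satisfying \thref{hy:11.65}, \thref{lm:bonami} gives $\Exp{D_t^4}\leq 9^{k-1}\Exp{D_t^2}^2$, and \thref{lm:basis} gives $\Exp{D_t^2}=\sum_{S\ni t}\hat{f}(S)^2=\infl{t}{f}$. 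Combining yields $\frac{3C}{4}\cdot 9^{k-1}\cdot\infl{t}{f}^2=\frac{C\cdot 9^k}{12}\infl{t}{f}^2$ per coordinate, and summing over $t$ gives the theorem.

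The main subtlety is the order-3 cancellation: without $\Exp{X_i^3}=\Exp{Y_i^3}=0$, the third-order Taylor terms would generically not match, leaving an $\infl{t}{f}^{3/2}$-scale residual that would destroy the desired $\infl{t}{f}^2$ bound and force a much weaker invariance principle. The other delicate point is making sure \thref{lm:bonami} applies cleanly at each step; this is valid precisely because the hybrid $\Z^{(t-1)}$ mixes $X$- and $Y$-coordinates that both satisfy \thref{hy:11.65}, so its restriction to the $n-1$ surviving coordinates is still an independent sequence obeying the hypercontractivity hypothesis.
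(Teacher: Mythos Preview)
Your proof is correct and follows the same Lindeberg replacement argument that the paper cites from \cite[ch.~11.6]{BoolBook} and reproduces in its proof of \thref{th:rbip}: hybrid sequences, the decomposition $f=E_t+x_tD_t$, third-order Taylor expansion with moment matching, and then hypercontractivity plus \thref{lm:basis} to identify the influence. The only cosmetic difference is that the paper keeps $\Delta_tX_t$ together and applies \thref{lm:bonami} to the degree-$k$ polynomial $x_tD_t$ to get a factor $9^k$ directly, whereas you factor $\Exp{(X_tD_t)^4}=\Exp{X_t^4}\Exp{D_t^4}$, bound $\Exp{X_t^4}\leq 9$, and apply \thref{lm:bonami} to the degree-$(k-1)$ polynomial $D_t$ for $9^{k-1}$; the product is the same $9^k$ either way.
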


The function $\psi$ used in the BIP is called a \emph{test function} or a
\emph{distinguisher}, and is used to specify a particular notion of
``closeness'' between two random variables. A natural measure is
\emph{cdf-closeness}, which is used in the Berry-Esseen Theorem. Two random
variables $X$ and $Y$ are cdf-close if ${\Pr{X\leq u}\approx\Pr{Y\leq u}}$ for
all $u\in\R$. Equivalently, two random variables are cdf-close if
$\Exp{\psi(X)}\approx\Exp{\psi(Y)}$ with $\psi(s)=1_{s\leq u}$ for all
$u\in\R$.  The BIP is powerful enough to give bounds on cdf-closeness and many
other notions of closeness.\footnote{Note that continuity of $\psi''''$ is
required.  Smoothing techniques can be used to approximate functions like
${\psi(s)=1_{s\leq u}}$.  There is of course a tradeoff between the quality of
the approximation and the magnitude of the fourth derivative of the smoothed
function. See \cite[ch.~11]{BoolBook} for more detail.}

\subsection{Bivariate Functions}

Finally, we specify the class of functions which we will refer to throughout
this paper simply as bivariate functions.

\begin{definition}
\label{def:biv}
An $n$-bivariate multilinear polynomial function ${f:\R^n\times\R^n\to\R}$ over
the sequences of indeterminates ${\xa=(x_{1,1} \mdots x_{1,n})}$ and
${\xb=(x_{2,1} \mdots x_{2,n})}$ is a function of the form
\begin{equation*}
    f(\xa,\xb) = \sum_{S_1,S_2\subseteq[n]} \hat{f}(S_1,S_2)
    \prod_{i\in S_1}x_{1,i}\prod_{j\in S_2}x_{2,j}.
\end{equation*}
\end{definition}

The form given in \thref{def:biv} suggests that $\hat{f}(S_1,S_2)$ is Fourier
coefficient. Indeed, if we consider $f$ to instead be a function of the
concatenated sequence $\x=\xa\|\xb$, then $\hat{f}(S_1,S_2)$ is the Fourier
coefficient on the set $S_1\cup S_2^+$, where ${S_2^+ = \{i+n: i\in S_2\}}$.
Nonetheless, we will not consider any subtleties of Fourier theory for
bivariate functions and we do not make any claims about any of the classic
Fourier identities in this bivariate basis.

\section{Random Functions}\label{sec:random}

In anticipation of BVIP-1, we introduce in this section the concept of random
multilinear polynomials and prove a version of the BIP for these functions.

\begin{definition}
\label{def:random}
A random $n$-variate multilinear polynomial ${\fZ:\R^n\to\R}$ is a multilinear
polynomial over the sequence of indeterminates $\x=(x_1 \mdots x_n)$ whose
coefficients ${\hat{\fZ}(S)\in\R}$ are random variables which are wholly
determined by the random variable $\Z$:
\begin{equation*}
    \fZ(\x) = \sum_{S\subseteq[n]} \hat{\fZ}(S)\prod_{i\in S}x_i.
\end{equation*}
The influence of coordinate $i$ on $\fZ$ is a random variable which is defined
to be
\begin{equation*}
    \infl{i}{\fZ} = \sum_{S\ni i}\hat{\fZ}(S)^2.
\end{equation*}
\end{definition}
We think of $\Z$ as the random variable which controls $\fZ$ or, alternatively,
which describes the randomness of $\fZ$. We will sometimes refer to random
multilinear polynomials simply as random functions. Such functions will always
be univariate.

In pursuit of a version of the BIP for random functions, we start with
appropriate corollaries of \thref{lm:bonami} and \thref{lm:basis}.

\begin{corollary}
\label{cr:rbonami}
Let $\X=(X_1 \mdots X_n)$ be a sequence of independent but not necessarily
identically distributed random variables satisfying the requirement that
$\Exp{X_i}=\Exp{X_i^3}=0$ and ${\Exp{X_i^4}\leq 9 \Exp{X_i^2}^2}$. Let $\fZ$ be
a random multilinear polynomial of degree at most $k$. Then
\begin{equation*}
    \Exp[\X,\Z]{\fZ(\X)^4} \leq 9^k\cdot \Exp[\X,\Z]{\fZ(\X)^2}^2.
\end{equation*}
\end{corollary}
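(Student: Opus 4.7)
My plan is to reduce the random-polynomial setting to the original \thref{lm:bonami} by conditioning on $\Z$. For each fixed realization of $\Z$, $\fZ$ is a deterministic multilinear polynomial of degree at most $k$ in $\X$, and because the coordinates of $\X$ satisfy \thref{hy:11.65} and are independent of $\Z$, the Bonami lemma applies conditionally to yield
\begin{equation*}
    \Exp[\X]{\fZ(\X)^4 \mid \Z} \leq 9^k \cdot \Exp[\X]{\fZ(\X)^2 \mid \Z}^2
\end{equation*}
as an inequality between $\Z$-measurable random variables. Taking $\Exp[\Z]{\cdot}$ of both sides and using the tower property on the left reduces the task to bounding the outer expectation $\Exp[\Z]{\Exp[\X]{\fZ(\X)^2 \mid \Z}^2}$ above by $\Exp[\X,\Z]{\fZ(\X)^2}^2$.

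This last step is the main obstacle. By \thref{lm:basis} applied conditionally, $\Exp[\X]{\fZ(\X)^2 \mid \Z} = \sum_{S\subseteq[n]} \hat\fZ(S)^2$, so what is needed is $\Exp[\Z]{(\sum_S \hat\fZ(S)^2)^2} \leq (\Exp[\Z]{\sum_S \hat\fZ(S)^2})^2$, which is the reverse of Jensen's inequality and is false for a generic random function. My plan to close this gap is to exploit extra structure of $\fZ$ that is actually available in the intended application in Section~\ref{sec:subtle}: write $\fZ(\X) = F(\X,\Z)$ as a single multilinear polynomial in the concatenated sequence $(\X,\Z)$ of joint degree at most $k$, where the entries of $\Z$ themselves also satisfy \thref{hy:11.65}. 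Under that structure a single application of \thref{lm:bonami} to $F$ over the joint sequence $(\X,\Z)$ yields $\Exp{F(\X,\Z)^4} \leq 9^k \cdot \Exp{F(\X,\Z)^2}^2$, which is exactly the claimed inequality. I would therefore present the proof as this joint-Bonami argument, verifying along the way that random functions obtained by partial substitution of one Bonami sequence into a deterministic bivariate multilinear polynomial — the only use case needed downstream — indeed satisfy the required joint-polynomial structure with total degree preserved, so that the exponent $9^k$ is still honest.
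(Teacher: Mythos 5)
Your opening move---condition on $\Z$, apply \thref{lm:bonami} realization by realization, then average over $\Z$---is exactly the paper's proof, and you have put your finger on the precise step where that proof breaks. After the conditional Bonami bound one is left with $9^k\cdot\Exp[\Z]{W^2}$ where $W=\Exp[\X]{\fZ(\X)^2\mid\Z}$, and the paper simply asserts that this \emph{equals} $9^k\cdot\Exp[\Z]{W}^2=9^k\cdot\Exp[\X,\Z]{\fZ(\X)^2}^2$. As you note, Jensen gives $\Exp[\Z]{W^2}\geq\Exp[\Z]{W}^2$, which points the wrong way, and the gap is not cosmetic: for the constant random polynomial $\fZ(\x)=Z$ with $Z\in\{0,1\}$ and $\Pr\left[Z=1\right]=p$, the claimed inequality reads $p\leq p^2$, which fails for every $p<1$. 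So the corollary is false as stated for an arbitrary controlling variable $\Z$, and your diagnosis of the obstruction is correct; the paper's own proof silently replaces $\Exp[\Z]{W^2}$ by $\Exp[\Z]{W}^2$.

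Your proposed repair---writing $\fZ(\X)=F(\X,\Z)$ as one multilinear polynomial in the concatenated sequence and invoking \thref{lm:bonami} once---is the right idea for the random functions that actually occur downstream, but your degree bookkeeping does not close. In \thref{th:subtle} the random function is $\fXb(\vt)=f(\vt,\Xb)$ with $\hat{f}(S_1,S_2)=0$ only when $\abs{S_1}>k$ or $\abs{S_2}>k$, so a term can have total degree $\abs{S_1}+\abs{S_2}$ as large as $2k$ in the concatenated variables; the joint Bonami constant is therefore $9^{2k}$, not $9^k$. (One reaches the same $9^{2k}$ by combining the conditional bound with hypercontractivity of the coefficients $\hat{\fZ}(S)$, each of which is a degree-$\leq k$ multilinear polynomial in $\Z$, so that $\Exp[\Z]{W^2}\leq 9^k\Exp[\Z]{W}^2$.) Thus your argument proves a correct but weaker corollary with constant $9^{2k}$, under the additional hypothesis that the coefficients are themselves multilinear in a sequence satisfying \thref{hy:11.65}; and propagating $9^{2k}$ through \thref{th:rbip} into \thref{th:subtle} erases the $9^k$-versus-$9^{2k}$ advantage the paper claims for BVIP-1. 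The claim that the joint degree is ``at most $k$'' must either be imposed as an extra hypothesis or replaced by $2k$; as written, that step of your plan would fail in the intended application.
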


\iflong
\begin{proof}
We expand the expectation over $\Z$ using the law of total expectation. Without
loss of generality, assume that $\Z$ is a random variable over a discrete
sample space $\mathcal{Z}$. Then we can write
\begin{align*}
    \Exp[\X,\Z]{\fZ(\X)^4} &= \sum_{\z\in\mathcal{Z}}
        \Pr{\Z=\z}\Exp[\X]{\fZ(\X)^4 \mid \Z=\z} \\
    &= \sum_{\z\in\mathcal{Z}} \Pr{\Z=\z}\Exp[\X]{\fz(\X)^4},
\end{align*}
where $\fz$ is the function $\fZ$ given that $\Z=\z$. Conditioning on $\Z=\z$
fixes the coefficients of $\fZ$, allowing us to apply \thref{lm:bonami}
directly.
\begin{align*}
    \Exp[\X,\Z]{\fZ(\X)^4} &\leq \sum_{\z\in\mathcal{Z}}
        \Pr{\Z=\z}\cdot 9^k\cdot \Exp[\X]{\fz(\X)^2}^2 \\
    &= 9^k \cdot \Exp[\X,\Z]{\fZ(\X)^2}^2.
\end{align*}
If $\mathcal{Z}$ is a continuous space, then we must integrate
over $\mathcal{Z}$ and consider the pdf of $\Z$, but the argument is otherwise
identical.
\end{proof}
\fi

\begin{corollary}
\label{cr:rbasis}
Let $\fZ$ be a random $n$-variate multilinear polynomial. When considering a
sequence of independent random variables $\X=(X_1 \mdots X_n)$ satisfying
$\Exp{X_i}=0$ and ${\Exp{X_i^2}=1}$, the parity functions $\chi_S = \prod_{i\in
S} X_i$ are orthonormal, and hence
\begin{equation*}
    \Exp[\X,\Z]{\fZ(\X)^2} = \Exp[\Z]{\sum_{S\subseteq [n]} \hat{\fZ}(S)^2}.
\end{equation*}
\end{corollary}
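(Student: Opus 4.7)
My plan is to mirror the proof of Corollary \ref{cr:rbonami}, reducing the claim to the ordinary Parseval-type identity of Lemma \ref{lm:basis} by conditioning on the auxiliary randomness $\Z$. The key observation is that once $\Z$ is fixed, the coefficients $\hat{\fZ}(S)$ become deterministic real numbers, so $\fZ$ collapses to an ordinary multilinear polynomial to which Lemma \ref{lm:basis} applies directly.

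Concretely, I would first apply the law of total expectation to split the joint expectation:
\[
\Exp[\X,\Z]{\fZ(\X)^2} = \Exp[\Z]{\Exp[\X]{\fZ(\X)^2 \mid \Z}}.
\]
In the inner conditional expectation, $\fZ$ is frozen to the deterministic polynomial $\fz$ whose coefficients are the realized values $\hat{\fz}(S)$ of $\hat{\fZ}(S)$ on the event $\Z=\z$. Because the coordinates of $\X$ are independent with $\Exp{X_i}=0$ and $\Exp{X_i^2}=1$, Lemma \ref{lm:basis} applies verbatim to give $\Exp[\X]{\fz(\X)^2} = \sum_{S\subseteq[n]} \hat{\fz}(S)^2$. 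Taking the outer expectation over $\Z$ and recognizing that $\hat{\fz}(S)^2$ is the realization of the random variable $\hat{\fZ}(S)^2$ then yields exactly the stated identity.

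As in Corollary \ref{cr:rbonami}, the argument can be written cleanly by assuming $\Z$ is supported on a discrete sample space $\mathcal{Z}$ and summing over $\z\in\mathcal{Z}$ against $\Pr{\Z=\z}$; the continuous case is identical modulo replacing the sum with an integral against the pdf of $\Z$. I do not foresee a substantive obstacle, as the proof is essentially Fubini combined with the orthonormality of the parity functions $\chi_S$ under $\X$. The only mild point to check is that the randomness of the coefficients is fully captured by $\Z$ and is independent of $\X$, which is precisely what permits freezing $\fZ$ inside the inner expectation without disturbing the independence of the coordinates of $\X$.
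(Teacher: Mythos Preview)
Your proposal is correct and follows essentially the same argument as the paper: condition on $\Z$ to freeze the coefficients, apply Lemma~\ref{lm:basis} to the resulting deterministic polynomial, and then take the outer expectation over $\Z$ (with the same discrete-vs-continuous remark). There is no substantive difference.
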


\iflong
\begin{proof}
As in the proof of \thref{cr:rbonami}, we expand the expectation over $\Z$
using the law of total expectation.
\begin{align*}
    \Exp[\X,\Z]{\fZ(\X)^2} &= \sum_{\z\in\mathcal{Z}} \Pr{\Z=\z}
        \Exp[\X]{\fZ(\X)^2 \mid \Z=\z} \\
    &= \sum_{\z\in\mathcal{Z}} \Pr{\Z=\z}
        \Exp[\X]{\fz(\X)^2}.
\end{align*}
Conditioning on $\Z=\z$, we apply \thref{lm:basis} directly.
\begin{align*}
    \Exp[\X,\Z]{\fZ(\X)^2} &= \sum_{\z\in\mathcal{Z}} \left( \Pr{\Z=\z}
        \sum_{S\subseteq [n]}\hat{\fz}(S)^2\right) \\
    &= \Exp[\Z]{\sum_{S\subseteq [n]} \hat{\fZ}(S)^2}.
\end{align*}
Again, if $\mathcal{Z}$ is a continuous space, we must instead integrate over
$\mathcal{Z}$ and consider the pdf of $\Z$ to the same effect.
\end{proof}
\fi

Our BIP for random functions is identical in spirit to \thref{th:bip}, but the
resulting upper bound is in terms of the expected influences of the given
function.

\begin{theorem}[BIP for Random Functions]
\label{th:rbip}
Let $\fZ$ be a random $n$-variate multilinear polynomial of degree at most $k$.
Let $\X$ and $\Y$ be $n$\nobreakdash-length sequences of independent random
variables satisfying \thref{hy:11.65}. Let $\psi:\R\to\R$ be $\mathcal{C}^4$.
Then
\begin{equation*}
    \abs{\Exp[\X,\Z]{\psi(\fZ(\X))} - \Exp[\Y,\Z]{\psi(\fZ(\Y))}} \leq
    \frac{C}{12}\cdot 9^k \cdot \sum_{t=1}^n \Exp[\Z]{\infl{t}{\fZ}}^2.
\end{equation*}
\end{theorem}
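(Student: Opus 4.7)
The plan is to mimic the hybrid replacement method that proves the standard BIP, carrying $\Z$ along as an independent source of randomness and substituting the random-function corollaries \thref{cr:rbonami} and \thref{cr:rbasis} for their deterministic counterparts wherever the ordinary proof invokes Bonami or Parseval. I introduce the hybrids $\Z^{(0)}=\X$, $\Z^{(t)}=(Y_1 \mdots Y_t,X_{t+1} \mdots X_n)$, and $\Z^{(n)}=\Y$, and apply the triangle inequality to the telescoping sum so that it suffices to bound each of the $n$ per-coordinate differences $\abs{\Exp{\psi(\fZ(\Z^{(t-1)}))}-\Exp{\psi(\fZ(\Z^{(t)}))}}$ by $\frac{C}{12}\cdot 9^k\cdot\Exp[\Z]{\infl{t}{\fZ}}^2$.

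For the step at coordinate $t$, I would decompose $\fZ=\Ef+x_t\Df$, where $\Ef$ and $\Df$ are random multilinear polynomials in the coordinates other than $t$ (and in $\Z$), with $\Df$ of degree at most $k-1$. Because $\Z$ is independent of every $X_i$ and $Y_i$ by hypothesis, conditioning on $\Z$ together with the hybrid coordinates outside $t$ freezes $\Ef$ and $\Df$, leaving only the single variable being swapped to average over. A fourth-order Taylor expansion of $\psi$ about $\Ef$ then has the property that, after taking the expectation, the zeroth- and second-order terms agree on the two sides (since $\Exp{X_t^2}=\Exp{Y_t^2}=1$) and the first- and third-order terms vanish on both sides (since $\Exp{X_t}=\Exp{X_t^3}=\Exp{Y_t}=\Exp{Y_t^3}=0$). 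The surviving fourth-order remainder is bounded by $\frac{C}{24}\Exp{(X_t\Df)^4}$ plus the analogous term in $Y_t$.

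To close the estimate, I would factor $\Exp{(X_t\Df)^4}=\Exp{X_t^4}\cdot\Exp{(\Df)^4}\leq 9\,\Exp{(\Df)^4}$ using the independence of $X_t$ from everything else, and likewise for $Y_t$. Since $\Df$ is a random multilinear polynomial of degree at most $k-1$ whose input sequence (a mixture of $Y_i$'s and $X_i$'s) still satisfies the moment hypotheses of \thref{cr:rbonami}, that corollary gives $\Exp{(\Df)^4}\leq 9^{k-1}\Exp{(\Df)^2}^2$, and \thref{cr:rbasis} identifies $\Exp{(\Df)^2}=\Exp[\Z]{\sum_{S\ni t}\hat{\fZ}(S)^2}=\Exp[\Z]{\infl{t}{\fZ}}$. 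Substituting and summing the resulting per-step bound $\frac{C}{12}\cdot 9^k\cdot\Exp[\Z]{\infl{t}{\fZ}}^2$ over $t$ yields the claim.

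The conceptually subtle point—and the reason the bound comes out in the form stated rather than with the weaker $\Exp[\Z]{\infl{t}{\fZ}^2}$—is that \thref{cr:rbonami} outputs $\Exp[\X,\Z]{(\Df)^2}^2$ (expectation first, then square) instead of $\Exp[\Z]{\Exp[\X]{(\Df)^2}^2}$, which is what would arise from merely conditioning on $\Z=\z$, invoking the deterministic BIP for each $\fz$, and then averaging over $\Z$. The main bookkeeping obstacle is tracking this independence structure carefully through the Taylor expansion, ensuring that $X_t$ (respectively $Y_t$) really does factor out of $\Ef$ and $\Df$ in every term so that the moment identities of \thref{hy:11.65} apply as advertised.
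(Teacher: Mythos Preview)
Your proposal is correct and follows essentially the same hybrid replacement argument as the paper, including the decomposition $\fZ=\Ef+x_t\Df$, the Taylor expansion with moment matching, and the use of \thref{cr:rbonami} and \thref{cr:rbasis} in place of their deterministic analogues. The only cosmetic difference is that you factor $\Exp{(X_t\Df)^4}=\Exp{X_t^4}\,\Exp{\Df^4}\leq 9\cdot 9^{k-1}\,\Exp{\Df^2}^2$ by independence before invoking \thref{cr:rbonami} on the degree-$(k-1)$ polynomial $\Df$, whereas the paper applies \thref{cr:rbonami} directly to the degree-$k$ polynomial $\Lf=x_t\Df$; both routes yield the identical factor $9^k$.
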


\iflong
\begin{proof}
The proof closely follows the proof of the BIP given in
\cite[ch.~11.6]{BoolBook}, so we omit some exposition which can be found there.
Nonetheless, for completeness we summarize the arguments and highlight the
points where the random functions affect the process.

We use the replacement method and define
\begin{equation*}
    H_t = \fZ(Y_1 \mdots Y_t, X_{t+1} \mdots X_n),
\end{equation*}
such that $H_0=\fZ(\X)$ and $H_n=\fZ(\Y)$. We show that
\begin{equation}
\label{eq:wantstaylor}
    \abs{\Exp[\X,\Y,\Z]{\psi(H_{t-1})-\psi(H_t)}} \leq \frac{C}{12}\cdot 9^k
    \cdot \Exp[\Z]{\infl{t}{\fZ}}^2.
\end{equation}
Summing over $t$ and applying the triangle inequality will complete the proof.

Let the random functions $\Ef$ and $\Df$ be defined as
\begin{align*}
    \Ef(x) &= \sum_{S\not\ni t}\hat{\fZ}(S)\prod_{i\in S}x_i \\
    \Df(x) &= \sum_{S\ni t}\hat{\fZ}(S)\prod_{i\in S\setminus \{t\}}x_i,
\end{align*}
such that $\fZ(\x)=\Ef(\x)+x_t \Df(\x)$. Since neither $\Ef$ nor $\Df$ depends
on $x_t$, we can define
\begin{align*}
    U_t &= \Ef(Y_1,\dots,Y_{t-1},\cdot,X_{t+1},\dots,X_{n}), \\
    \Delta_t &= \Df(Y_1,\dots,Y_{t-1},\cdot,X_{t+1},\dots,X_{n}),
\end{align*}
so that
\begin{equation*}
    H_{t-1} = U_t + \Delta_t X_t, \quad H_t = U_t + \Delta_t Y_t.
\end{equation*}

We can then bound \eqref{eq:wantstaylor} by taking 3rd-order Taylor expansions
of $\psi(H_{t-1})$ and $\psi(H_t)$ and then taking the difference between them.
After subtracting and taking expectations over $\X$, $\Y$, and $\Z$, the
0th-order terms cancel directly, and the 1st-, 2nd-, and 3rd-order terms cancel
because $X_t$ and $Y_t$ are independent of $U_t$ and $\Delta_t$ and $X_t$ and
$Y_t$ have matching 1st, 2nd, and 3rd moments. For the 4th-order error term, we
apply the triangle inequality and make use of the assumption that
$\abs{\psi''''(U_t^{\ast})},\abs{\psi''''(U_t^{\ast\ast})}\leq C$ to upper
bound the left-hand side of \eqref{eq:wantstaylor} by
\begin{equation*}
    \frac{C}{24}\cdot\left( \Exp[\X,\Y,\Z]{(\Delta_t X_t)^4} +
    \Exp[\X,\Y,\Z]{(\Delta_t Y_t)^4}\right).
\end{equation*}

All that remains is to bound
\begin{equation*}
    \Exp[\X,\Y,\Z]{(\Delta_t X_t)^4}, \Exp[\X,\Y,\Z]{(\Delta_t Y_t)^4}
    \leq 9^k \cdot \Exp[\Z]{\infl{t}{\fZ}}^2,
\end{equation*}
which can be done using \thref{cr:rbonami}. We give details for the case of
$\Exp[\X,\Y,\Z]{(\Delta_t X_t)^4}$. The case for $\Exp[\X,\Y,\Z]{(\Delta_t
Y_t)^4}$ is identical. Define
\begin{equation*}
    \Lf(\x) = x_t \Df(\x) = \sum_{S\ni t}\hat{f}(S)\prod_{i\in S}x_i.
\end{equation*}
Then, $\Delta_t X_t=\Lf(Y_1 \mdots X_t \mdots X_n)$. Since $\Lf$ has degree
at most $k$ we can apply \thref{cr:rbonami} to obtain
\begin{equation}
\label{eq:crbound}
    \Exp[\X,\Y,\Z]{\left(\Delta_t X_t\right)^4} \leq 9^k \cdot
    \Exp[\X,\Y,\Z]{(\Delta_t X_t)^2}^2.
\end{equation}
Finally, since the elements of $\X$ and $\Y$ all have mean 0 and 2nd moment 1,
by \thref{cr:rbasis}
\begin{align}
\nonumber
    \Exp[\X,\Y,\Z]{\left(\Delta_t X_t\right)^2} &=
    \Exp[\Z]{\sum_{S\subseteq[n]}\hat{\Lf}(S)^2} \\
\nonumber
    &= \Exp[\Z]{\sum_{S\ni t}\hat{\fZ}(S)^2} \\
\label{eq:infbound}
    &= \Exp[\Z]{\infl{t}{\fZ}}.
\end{align}
Combining \eqref{eq:crbound} and \eqref{eq:infbound}, we have that
\begin{equation*}
    \Exp[\X,\Y,\Z]{(\Delta_t X_t)^4} \leq 9^k \cdot \Exp[\Z]{\infl{t}{\fZ}}^2,
\end{equation*}
which completes the proof.
\end{proof}
\else
For the sake of brevity and due to its similarity to the proof of the BIP given
in \cite[ch.~11.6]{BoolBook}, we omit the full proof of \thref{th:rbip}. The
only differences are that all expectations are also taken over $\Z$ and that
\thref{cr:rbonami} and \thref{cr:rbasis} are used in place of \thref{lm:bonami}
and \thref{lm:basis} respectively.
\fi

\section{A Bivariate Invariance Principle}\label{sec:subtle}

We now present BVIP-1. To derive it, we iteratively apply the BIP
to replace each input sequence in turn. The first step in this process is to
treat the input sequence which is not currently being replaced as a random
parameter of the function, allowing us to view the bivariate function as a
random univariate function. We can then use the BIP for random functions to
bound the error incurred by this replacement.

\begin{theorem}[BVIP-1]
\label{th:subtle}
Let $f$ be an $n$-bivariate multilinear polynomial in which each term includes
at most $k$ elements from each input sequence:
\begin{equation*}
    f(\xa,\xb) = \sum_{S_1,S_2\subseteq [n]}
        \hat{f}(S_1,S_2) \prod_{i\in S_1}x_{1,i} \prod_{j\in S_2}x_{2,j},
\end{equation*}
where $\hat{f}(S_1,S_2)=0$ if $\abs{S_1}>k$ or $\abs{S_2}>k$.  Let
$\Xa$, $\Xb$, $\Ya$, and $\Yb$ be $n$-length sequences of independent
random variables satisfying \thref{hy:11.65}. Assume $\psi:\R\to\R$ is
$\mathcal{C}^4$ with $\norm{\psi''''}_\infty\leq C$. Then
\begin{equation}
\label{eq:bvip-1}
    \abs{E_X - E_Y} \leq \frac{C}{12}\cdot 9^{k} \cdot \sum_{t=1}^n
        \left(\tilde{\Sigma}_{1,t}^2 + \tilde{\Sigma}_{2,t}^2\right),
\end{equation}
where
\begin{align*}
    E_X &= \Exp[\Xa,\Xb]{\psi(f(\Xa,\Xb))} \\
    E_Y &= \Exp[\Ya,\Yb]{\psi(f(\Ya,\Yb))} \\
    \tilde{\Sigma}_{1,t} &= \sum_{S_1\ni t}\abs{T_2(S_1)}
        \sum_{S_2\in T_2(S_1)}\hat{f}(S_1,S_2)^2 \\
    \tilde{\Sigma}_{2,t} &= \sum_{S_2\ni t}\abs{T_1(S_2)}
        \sum_{S_1\in T_1(S_2)}\hat{f}(S_1,S_2)^2,
\end{align*}
and $T_2(S_1)$ and $T_1(S_2)$ are the sets
\begin{align*}
    T_2(S_1) &= \left\{ S_2\subseteq[n]:\abs{S_2}\leq k,\,
        \hat{f}(S_1,S_2)\neq0 \right\} \\
    T_1(S_2) &= \left\{S_1\subseteq[n]:\abs{S_1}\leq k,\,
        \hat{f}(S_1,S_2)\neq0 \right\}.
\end{align*}
\end{theorem}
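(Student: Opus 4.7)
The plan is to interpolate from $(\Xa, \Xb)$ to $(\Ya, \Yb)$ through the hybrid pair $(\Ya, \Xb)$, bounding each of the two transitions by applying the BIP for random functions (Theorem~\ref{th:rbip}) to $f$ viewed as a random univariate polynomial.

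I would begin by applying the triangle inequality,
$$\abs{E_X - E_Y} \leq \abs{E_X - E_M} + \abs{E_M - E_Y},$$
where $E_M = \Exp[\Ya,\Xb]{\psi(f(\Ya,\Xb))}$. For the first piece, I would view $\fXb(\xa) = f(\xa, \Xb)$ as a random multilinear polynomial in $\xa$ controlled by the random parameter $\Z = \Xb$; its random coefficients are $\hat{\fXb}(S_1) = \sum_{S_2} \hat{f}(S_1, S_2) \prod_{j \in S_2} X_{2,j}$, and its degree is at most $k$ by hypothesis on $f$. Theorem~\ref{th:rbip} then yields
$$\abs{E_X - E_M} \leq \frac{C}{12}\cdot 9^k \cdot \sum_{t=1}^n \Exp[\Xb]{\infl{t}{\fXb}}^2,$$
and the symmetric argument with $\fYa(\xb) = f(\Ya, \xb)$ and $\Z = \Ya$ bounds $\abs{E_M - E_Y}$ analogously.

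The key computational step is then to bound $\Exp[\Xb]{\infl{t}{\fXb}}$ and $\Exp[\Ya]{\infl{t}{\fYa}}$ in terms of the Fourier coefficients of $f$. Expanding the definition gives $\infl{t}{\fXb} = \sum_{S_1 \ni t} \hat{\fXb}(S_1)^2$, and I would apply Cauchy-Schwarz to the $\abs{T_2(S_1)}$-term sum defining $\hat{\fXb}(S_1)$ to obtain
$$\hat{\fXb}(S_1)^2 \leq \abs{T_2(S_1)} \sum_{S_2 \in T_2(S_1)} \hat{f}(S_1, S_2)^2 \prod_{j \in S_2} X_{2,j}^2.$$
Taking expectation over $\Xb$, using independence and $\Exp{X_{2,j}^2} = 1$ from Assumption~\ref{hy:11.65}, collapses the product to $1$ and yields $\Exp[\Xb]{\infl{t}{\fXb}} \leq \tilde{\Sigma}_{1,t}$. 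The symmetric argument gives $\Exp[\Ya]{\infl{t}{\fYa}} \leq \tilde{\Sigma}_{2,t}$. Substituting into the two BIP bounds and summing produces~\eqref{eq:bvip-1}.

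The main obstacle is bookkeeping rather than any conceptual leap: one must correctly recast the bivariate polynomial as a random univariate polynomial at each step, verifying the degree bound so that the exponent in Theorem~\ref{th:rbip} remains $9^k$ rather than inflating to $9^{2k}$, and one must compose the expectations over the sequence being replaced and over the residual randomness $\Z$ in a manner consistent with how the BIP for random functions was stated. The Cauchy-Schwarz step is routine, but it is precisely what introduces the $\abs{T_2(S_1)}$ factor appearing in $\tilde{\Sigma}_{1,t}$ (rather than the tighter prefactor one could instead derive via Parseval on the second-sequence coefficients).
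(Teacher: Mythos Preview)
Your proposal is correct and matches the paper's proof essentially step for step: the same hybrid $E_M = E_{XY}$, the same two applications of Theorem~\ref{th:rbip} to $\fXb$ and $\fYa$, and the same Cauchy--Schwarz bound on $\hat{\fXb}(S_1)^2$ followed by $\Exp{X_{2,j}^2}=1$ to reach $\tilde{\Sigma}_{1,t}$. Your closing parenthetical about Parseval yielding a tighter prefactor is a valid side observation, but the paper deliberately uses the Cauchy--Schwarz route and arrives at the stated bound.
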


\begin{proof}
As described above, our strategy is to define random functions $\fXb$ and
$\fYa$ such that $\fXb(\Xa) = f(\Xa,\Xb)$ and $\fYa(\Xb) = f(\Ya,\Xb)$.
Applying \thref{th:rbip} to $\fXb$ bounds the error incurred by replacing
$\Xa$ with $\Ya$. An application of the same theorem to $\fYa$ bounds the error
in replacing $\Xb$ and $\Yb$. Computing the expected influences of the random
functions and using the triangle inequality will complete the proof.

We begin by constructing the desired random functions. Let $\fXb:\R^n\to\R$ be
defined as
\begin{align*}
    \fXb(\vt) &= f(\vt,\Xb) \\
    &= \sum_{S_1\subseteq [n]} \left[\sum_{S_2\subseteq [n]}
    \hat{f}(S_1,S_2) \prod_{j \in S_2} X_{2,j}\right]
    \prod_{i\in S_1}t_i \\
    &= \sum_{S_1\subseteq [n]} \hat{\fXb}(S_1)\prod_{i\in S_1}t_i.
\end{align*}
Similarly, let $\fYa:\R^n\to\R$ be defined as
\begin{align*}
    \fYa(\vt) &= f(\Ya,\vt) \\
    &= \sum_{S_2\subseteq [n]} \left[\sum_{S_1\subseteq [n]}
    \hat{f}(S_1,S_2) \prod_{i \in S_1} Y_{1,j}\right]
    \prod_{i\in S_2}t_i \\
    &= \sum_{S_2\subseteq [n]} \hat{\fYa}(S_2)\prod_{i\in S_2}t_i.
\end{align*}
Note that both $\fXb$ and $\fYa$ are of degree at most $k$ and that
$\fXb(\Ya) = \fYa(\Xb)$. From the definitions of $E_X$ and $E_Y$,
\begin{align*}
    E_X &= \Exp[\Xa,\Xb]{\psi(\fXb(\Xa))} \\
    E_Y &= \Exp[\Ya,\Yb]{\psi(\fYa(\Yb))}.
\end{align*}
By analogy, let
\begin{align*}
    E_{XY} &= \Exp[\Ya,\Xb]{\psi(f(\Ya,\Xb)} \\
    &= \Exp[\Ya,\Xb]{\psi(\fXb(\Ya)} \\
    &= \Exp[\Ya,\Xb]{\psi(\fYa(\Xb)}.
\end{align*}

We upper bound the quantity of interest as
\begin{equation}
\label{eq:subt-tri}
    \abs{E_X - E_Y} \leq \abs{E_X - E_{XY}} + \abs{E_{XY} - E_Y}.
\end{equation}
Applying \thref{th:rbip} to each term on the right-hand side of
\eqref{eq:subt-tri} yields
\begin{align}
\label{eq:deste1}
\abs{E_X-E_{XY}} &\leq \frac{C}{12}\cdot 9^k \cdot \sum_{t=1}^n
    \Exp[\Xb]{\infl{t}{\fXb}}^2 \\
\label{eq:deste2}
\abs{E_{XY}-E_Y} &\leq \frac{C}{12}\cdot 9^k \cdot \sum_{t=1}^n
        \Exp[\Ya]{\infl{t}{\fYa}}^2.
\end{align}

All that remains is to bound the expected influences of $\fXb$ and $\fYa$. We
handle the case of $\fXb$ explicitly, with the argument for $\fYa$ being
identical. For convenience, define
\begin{equation*}
    \sigma_2(S_1) = \sum_{S_2\in T_2(S_1)}\hat{f}(S_1,S_2)^2.
\end{equation*}
We have
\begin{align}
\nonumber
    &\Exp[\Xb]{\infl{t}{\fXb}} = \Exp[\Xb]{\sum_{S_1\ni t}\hat{\fXb}(S_1)^2} \\
\nonumber
    &\quad= \Exp[\Xb]{\sum_{S_1\ni t} \left(\sum_{S_2\subseteq [n]}
        \hat{f}(S_1,S_2) \prod_{j \in S_2} X_{2,j}\right)^2} \\
\label{eq:set}
    &\quad= \sum_{S_1\ni t} \Exp[\Xb]{\left(\sum_{S_2\in T_2(S_1)}
        \hat{f}(S_1,S_2) \prod_{j \in S_2} X_{2,j}\right)^2} \\
\label{eq:csineq}
    &\quad\leq \sum_{S_1\ni t} \Exp[\Xb]{\sigma_2(S_1)
        \left(\sum_{S_2\in T_2(S_1)} \prod_{j \in S_2} X_{2,j}^2\right)} \\
\label{eq:linexp}
    &\quad= \sum_{S_1\ni t} \sigma_2(S_1) \left(\sum_{S_2\in T_2(S_1)}
        \Exp[\Xb]{\prod_{j \in S_2} X_{2,j}^2}\right) \\
\label{eq:x2indp}
    &\quad= \sum_{S_1\ni t} \sigma_2(S_1) \left(\sum_{S_2\in T_2(S_1)}
    \prod_{j \in S_2} \Exp[\Xb]{X_{2,j}^2}\right) \\
\label{eq:x2mom2}
    &\quad= \sum_{S_1\ni t} \abs{T_2(S_1)} \cdot \sigma_2(S_1) \\
\label{eq:sigma1}
    &\quad= \tilde{\Sigma}_{1,t},
\end{align}
where \eqref{eq:set} follows from linearity of expectation and the fact that
for a given $S_1$, we have $\hat{f}(S_1,S_2)\neq0$ only if ${S_2\in T_2(S_2)}$;
\eqref{eq:csineq} follows from the Cauchy-Schwarz inequality; \eqref{eq:linexp}
again follows from linearity of expectation; \eqref{eq:x2indp} follows from the
assumption that the elements of $\Xb$ are independent; and \eqref{eq:x2mom2}
follows from the assumption that $\Exp{X_{2,j}^2}=1$ for all $j\in[n]$. The
same argument applied to $\fYa$ gives
\begin{equation}
\label{eq:sigma2}
    \Exp[\Ya]{\infl{t}{\fYa}} \leq \tilde{\Sigma}_{2,t}.
\end{equation}

Substituting \eqref{eq:sigma1} and \eqref{eq:sigma2} into \eqref{eq:deste1} and
\eqref{eq:deste2} respectively yields
\begin{align*}
    \abs{E_X-E_{XY}} &\leq \frac{C}{12}\cdot 9^k \cdot \sum_{t=1}^n
        \tilde{\Sigma}_{1,t}^2 \\
    \abs{E_{XY}-E_Y} &\leq \frac{C}{12}\cdot 9^k \cdot \sum_{t=1}^n
        \tilde{\Sigma}_{2,t}^2,
\end{align*}
which, combined with \eqref{eq:subt-tri}, completes the proof.
\end{proof}

\section{Discussion and Conclusion}\label{sec:discussion}

As a baseline against which we can compare the bound of \thref{th:subtle}, we
state as a corollary the bound which the BIP yields when we treat a bivariate
function as a univariate function. Such a univariate interpretation takes as
input the concatenation of the two sequences which are the inputs to the
original bivariate function. We refer to this naive bivariate invariance
principle as BVIP-2.

\begin{corollary}[BVIP-2]
\label{cr:naive}
Let $f$ be an $n$-bivariate multilinear polynomial in which each term includes
at most $k$ elements from each input sequence:
\begin{equation*}
    f(\xa,\xb) = \sum_{S_1,S_2\subseteq [n]}
        \hat{f}(S_1,S_2) \prod_{i\in S_1}x_{1,i} \prod_{j\in S_2}x_{2,j},
\end{equation*}
where $\hat{f}(S_1,S_2)=0$ if $\abs{S_1}>k$ or $\abs{S_2}>k$.  Let
$\Xa$, $\Xb$, $\Ya$, and $\Yb$ be $n$-length sequences of independent
random variables satisfying \thref{hy:11.65}. Assume $\psi:\R\to\R$ is
$\mathcal{C}^4$ with $\norm{\psi''''}_\infty\leq C$. Then
\begin{equation}
\label{eq:bvip-2}
    \abs{E_X - E_Y} \leq \frac{C}{12}\cdot 9^{2k} \cdot \sum_{t=1}^n
        \left(\Sigma_{1,t}^2 + \Sigma_{2,t}^2\right),
\end{equation}
where $E_X$, $E_Y$, $T_2(S_1)$ and $T_1(S_2)$ are as defined in
\thref{th:subtle}, and
\begin{align*}
    \Sigma_{1,t} &= \sum_{S_1\ni t}\sum_{S_2\in T_2(S_1)}\hat{f}(S_1,S_2)^2 \\
    \Sigma_{2,t} &= \sum_{S_2\ni t}\sum_{S_1\in T_1(S_2)}\hat{f}(S_1,S_2)^2.
\end{align*}
\end{corollary}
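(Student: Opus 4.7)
The plan is to prove BVIP-2 as a direct corollary of the BIP by viewing the bivariate polynomial $f(\xa,\xb)$ as an ordinary univariate multilinear polynomial over the concatenated sequence $\x = \xa \| \xb \in \R^{2n}$. Specifically, I would define $g:\R^{2n}\to\R$ by $g(\x) = f(\xa,\xb)$, with Fourier expansion over $2^{[2n]}$ in the natural basis $\hat g(S_1 \cup S_2^+) = \hat f(S_1, S_2)$, where $S_2^+ = \{i+n : i \in S_2\}$ as noted in the remark following \thref{def:biv}. Since each term in the expansion of $f$ contains at most $k$ indices from each sequence, $g$ has degree at most $2k$.

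Next, I would concatenate the random inputs: let $\X = \Xa \| \Xb$ and $\Y = \Ya \| \Yb$, both of length $2n$. Because $\Xa,\Xb,\Ya,\Yb$ are independent sequences of independent random variables each satisfying \thref{hy:11.65}, the concatenated sequences $\X$ and $\Y$ are themselves sequences of $2n$ independent random variables satisfying \thref{hy:11.65}. Hence \thref{th:bip} applies to $g$ with degree $2k$, yielding
\begin{equation*}
    \abs{\Exp{\psi(g(\X))} - \Exp{\psi(g(\Y))}}
    \leq \frac{C}{12}\cdot 9^{2k}\cdot \sum_{t=1}^{2n}\infl{t}{g}^2.
\end{equation*}
The left-hand side is exactly $\abs{E_X - E_Y}$ by construction.

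It then remains to identify the $2n$ per-coordinate influences of $g$ with the quantities $\Sigma_{1,t}$ and $\Sigma_{2,t}$ from the statement. For $t \in [n]$ corresponding to a coordinate in the first block, \thref{def:inf} gives
\begin{equation*}
    \infl{t}{g} = \sum_{T \ni t,\,T\subseteq[2n]} \hat g(T)^2
    = \sum_{S_1 \ni t}\sum_{S_2 \in T_2(S_1)} \hat f(S_1,S_2)^2 = \Sigma_{1,t},
\end{equation*}
where the restriction to $S_2 \in T_2(S_1)$ is vacuous once we drop zero coefficients but makes the identification with the statement explicit. The analogous computation for $t \in [n]$ corresponding to a coordinate in the second block yields $\Sigma_{2,t}$. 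Splitting the sum $\sum_{t=1}^{2n}\infl{t}{g}^2$ into its two blocks and substituting completes the derivation of \eqref{eq:bvip-2}.

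There is no genuine obstacle here; the content is entirely bookkeeping. The only point to handle carefully is the degree: one must notice that even though $f$ is degree at most $k$ in each input sequence separately, the univariate view sees total degree up to $2k$, which is precisely what produces the $9^{2k}$ factor that makes BVIP-2 a worse benchmark than BVIP-1 in the regime where the separation of coordinates can be exploited.
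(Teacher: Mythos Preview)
Your proposal is correct and follows essentially the same route as the paper: define the univariate polynomial $g$ on the concatenated input, observe that it has degree at most $2k$, apply \thref{th:bip}, and then identify $\infl{t}{g}$ with $\Sigma_{1,t}$ or $\Sigma_{2,t}$ according to which block $t$ lies in. The only cosmetic difference is that the paper writes the Fourier coefficients of $g$ via the decomposition $S\mapsto(S_1^*,S_2^*)$ rather than via $S_1\cup S_2^+$, but the two descriptions are equivalent.
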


\iflong
\begin{proof}
The strategy is to define a univariate function $g$ which is equivalent to $f$
when the two input sequences are considered as a single sequence so that we may
then apply the BIP to $g$. Given a particular subset $S\subseteq[2n]$, let
${S^*_1=S\cap [n]}$, ${\tilde{S}^*_2=S\cap\{n+1 \mdots 2n\}}$, and
${S^*_2=\{i:n+i\in \tilde{S}^*_2\}}$. Then let $g$ be a $2n$-variate
multilinear polynomial of degree such that
\begin{equation*}
    g(\x) = \sum_{S\subseteq[2n]}
        \hat{f}(S^*_1,S^*_2)\prod_{i\in S}x_i.
\end{equation*}
For any $n$-length sequences $\xa$ and $\xb$, it is clear that
$g(\xa\|\xb)=f(\xa,\xb)$, where $\xa\|\xb$ is the concatentation of $\xa$ and
$\xb$. Furthermore, since $\hat{f}(S_1,S_2)=0$ when $\abs{S_1}>k$ or
$\abs{S_2}>k$, by construction $g$ is of degree at most $2k$.

Applying the BIP to $g$ for the concatenations $\X=\Xa\|\Xb$ and $\Y=\Ya\|\Yb$
yields
\begin{equation}
\label{eq:naivesub}
    \abs{\Exp{\psi(g(\X))} - \Exp{\psi(g(\Y))}} \leq \frac{C}{12} \cdot
    9^{2k}\cdot \sum_{t=1}^{2n}\infl{t}{g}^2.
\end{equation}
We now compute $\infl{t}{g}$ in terms of the coefficients of $f$. By
definition, $\hat{g}(S)=\hat{f}(S^*_1,S^*_2)$. Thus, for $t\in[n]$,
\begin{align}
\nonumber
    \infl{t}{g} &= \sum_{S\ni t}\hat{G}(S)^2 \\
\nonumber
    &= \sum_{S_1\ni t}\sum_{S_2\subseteq[n]}\hat{f}(S_1,S_2)^2 \\
\label{eq:t2s1}
    &= \sum_{S_1\ni t}\sum_{S_2\in T_2(S_1)}\hat{f}(S_1,S_2)^2 \\
\nonumber
    &= \Sigma_{t,1}.
\end{align}
where \eqref{eq:t2s1} follows from the definition of the set $T_2(S_1)$.
By a parallel argument, for $t\in\{n+1 \mdots n\}$,
\begin{align}
\nonumber
    \infl{t}{g} &= \sum_{S\ni t}\hat{G}(S)^2 \\
\nonumber
    &= \sum_{S_2\ni t-n}\sum_{S_1\subseteq[n]}\hat{f}(S_1,S_2)^2 \\
\label{eq:t1s2}
    &= \sum_{S_2\ni t-n}\sum_{S_1\in T_1(S_2)}\hat{f}(S_1,S_2)^2 \\
\nonumber
    &= \Sigma_{t-n,2}.
\end{align}

Combining \eqref{eq:t2s1} and \eqref{eq:t1s2},
\begin{align}
\nonumber
    \sum_{t=1}^{2n} \infl{t}{g}^2 &= \sum_{t=1}^n \infl{t}{g}^2 +
        \sum_{t=n+1}^{2n} \infl{t}{g}^2 \\
\nonumber
    &= \sum_{t=1}^n \Sigma_{t,1}^2 + \sum_{t=n+1}^{2n}\Sigma_{t-n,2}^2 \\
\label{eq:naivepenult}
    &= \sum_{t=1}^n \Sigma_{t,1}^2 + \Sigma_{t,2}^2.
\end{align}
Substituting \eqref{eq:naivepenult} into \eqref{eq:naivesub} yields the desired
inequality after replacing $g(\X)$ and $g(\Y)$ on the left-hand side of
\eqref{eq:naivesub} with $f(\Xa,\Xb)$ and $f(\Ya,\Yb)$.
\end{proof}
\fi

Note that in the context of BVIP-2, $k$ is not strictly speaking
the degree of $f$, as it is in the BIP. Indeed, the degree of $f$ can be as
large as $2k$ here, and as such, the bound incurs a factor of $9^{2k}$ directly
from the BIP.

Comparing the bounds of BVIP-1 and BVIP-2 given in \eqref{eq:bvip-1} and
\eqref{eq:bvip-2} respectively, the main differences are a factor of $9^{k}$
versus $9^{2k}$ and the quantity $\tilde{\Sigma}_{i,t}$ versus $\Sigma_{i,t}$
(for $i\in\{1,2\}$), which we recall are defined (for $i=1$) as
\begin{align*}
    \tilde{\Sigma}_{1,t} &= \sum_{S_1\ni t}\abs{T_2(S_1)}
        \sum_{S_2\in T_2(S_1)}\hat{f}(S_1,S_2)^2 \\
    \Sigma_{1,t} &= \sum_{S_1\ni t}\sum_{S_2\in T_2(S_1)}\hat{f}(S_1,S_2)^2.
\end{align*}
Thus, BVIP-1 trades a factor of $9^k$ compared to BVIP-2 in exchange for
counting $\abs{T_2(S_1)}$ for each $S_1$ (and likewise $\abs{(T_1(S_2)}$ for
each $S_2$). We can conceptualize $\abs{T_2(S_1)}$ and $\abs{T_1(S_2)}$ as
measuring the ``strength'' of the interaction in $f$ between the inputs $\Xa$
and $\Xb$. If those cardinalities are large, then there are many terms of $f$
in which some coordinates of $\Xb$ are multiplied with the coordinates of
$\Xa$. Note that $\abs{(T_1(S_2)}$ and $\abs{T_2(S_1)}$ both arise from
applying the Cauchy-Schwarz inequality, as in \eqref{eq:csineq}, and are hence
upper bounds on this interaction strength.

The question of whether BVIP-1 outperforms BVIP-2 for a particular $f$ is thus
a question of whether the interaction between $\Xa$ and $\Xb$ is ``small
enough'' to beat the extra factor of $9^k$ incurred by BVIP-2. As a concrete
example of a family of functions for which BVIP\nobreakdash-1 is always tighter
than BVIP-2, consider \emph{separable} bivariate functions.


\begin{definition}
An $n$-bivariate multilinear polynomial function ${F:\R^n\times\R^n\to\R}$ is
separable into $f$, $g$, and $h$ if it can be written in terms of $n$-variate
multilinear polynomials $f,g,h:\R^n\to\R$ like
\begin{equation*}
    F(\xa,\xb) = f(\xa) + g(\xb) + h(\xa\xb).
\end{equation*}
\end{definition}

For separable functions, the bounds of both bivariate invariance principles can
be cleanly expressed in terms of the influences of $f$, $g$ and $h$, resulting
in a form which is very close to that of the BIP.

\begin{corollary}[Separable BVIP-1]
\label{cr:sepsubt}
Let $F$ be an $n$-bivariate multilinear polynomial which is separable into $f$,
$g$, and $h$, each of which is of degree at most $k$. Let $\Xa$, $\Xb$, $\Ya$,
and $\Yb$ be $n$-length sequences of independent random variables satisfying
\thref{hy:11.65}. Assume $\psi:\R\to\R$ is $\C^4$ with $\norm{\psi''''}\leq C$.
Then
\begin{equation*}
    \abs{E_X-E_Y} \leq \frac{2C}{3}\cdot 9^k \cdot \sum_{t=1}^n
    \left(\infl{t}{f}^2 + \infl{t}{g}^2 + 2\infl{t}{h}^2\right),
\end{equation*}
where $E_X$ and $E_Y$ are as defined in \thref{th:subtle}.
\end{corollary}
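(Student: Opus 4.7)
The plan is to apply \thref{th:subtle} (BVIP-1) directly to $F$ and then exploit the very sparse bivariate Fourier structure of a separable function to collapse $\tilde{\Sigma}_{1,t}$ and $\tilde{\Sigma}_{2,t}$ into expressions involving only the univariate influences of $f$, $g$, and $h$.

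First I would write down the bivariate Fourier coefficients of $F$. Since $f(\xa)$ is univariate in $\xa$, $g(\xb)$ is univariate in $\xb$, and $h(\xa\xb)=\sum_S \hat{h}(S)\prod_{i\in S}x_{1,i}x_{2,i}$ uses only paired coordinates, the only nonzero bivariate coefficients $\hat{F}(S_1,S_2)$ fall into three families: $S_2=\emptyset$ with $\hat{F}(S_1,\emptyset)=\hat{f}(S_1)$; $S_1=\emptyset$ with $\hat{F}(\emptyset,S_2)=\hat{g}(S_2)$; and $S_1=S_2$ with $\hat{F}(S,S)=\hat{h}(S)$. Since $f$, $g$, and $h$ each have degree at most $k$, every term of $F$ contains at most $k$ elements from each input sequence, so the hypothesis of \thref{th:subtle} is met.

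The crucial consequence is that for each $S_1\ni t$, the set $T_2(S_1)$ contains at most the two elements $\emptyset$ and $S_1$, so $\abs{T_2(S_1)}\leq 2$ and $\sum_{S_2\in T_2(S_1)}\hat{F}(S_1,S_2)^2 \leq \hat{f}(S_1)^2 + \hat{h}(S_1)^2$. Plugging into the definition of $\tilde{\Sigma}_{1,t}$ yields
\[
\tilde{\Sigma}_{1,t} \leq 2\sum_{S_1\ni t}\left(\hat{f}(S_1)^2 + \hat{h}(S_1)^2\right) = 2\left(\infl{t}{f} + \infl{t}{h}\right),
\]
and the symmetric calculation gives $\tilde{\Sigma}_{2,t}\leq 2(\infl{t}{g}+\infl{t}{h})$. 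Applying the elementary inequality $(a+b)^2\leq 2(a^2+b^2)$ to each bound produces $\tilde{\Sigma}_{1,t}^2 + \tilde{\Sigma}_{2,t}^2 \leq 8\bigl(\infl{t}{f}^2 + \infl{t}{g}^2 + 2\infl{t}{h}^2\bigr)$, which when substituted into the $\frac{C}{12}\cdot 9^k$ bound of \thref{th:subtle} collapses the leading constant to $\frac{8C}{12}\cdot 9^k = \frac{2C}{3}\cdot 9^k$, matching the claim exactly.

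There is no serious obstacle here; the argument is essentially a substitution into \thref{th:subtle}. The main thing to keep straight is the bookkeeping of which bivariate Fourier coefficients of a separable $F$ are nonzero, so that the factor $\abs{T_2(S_1)}$ and the inner Fourier sum both collapse to contributions from $f$ and $h$ only (and analogously from $g$ and $h$ for $\tilde{\Sigma}_{2,t}$). Once that collapse is recognized, the constants fall out immediately.
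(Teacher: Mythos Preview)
Your proposal is correct and follows essentially the same route as the paper: identify the three families of nonzero bivariate coefficients of a separable $F$, bound $\abs{T_2(S_1)}$ and $\abs{T_1(S_2)}$ by $2$, collapse $\tilde{\Sigma}_{1,t}$ and $\tilde{\Sigma}_{2,t}$ to $2(\infl{t}{f}+\infl{t}{h})$ and $2(\infl{t}{g}+\infl{t}{h})$, apply $(a+b)^2\le 2(a^2+b^2)$ (which the paper phrases as Cauchy--Schwarz), and substitute into \thref{th:subtle}. Your use of $\le$ rather than $=$ for $\abs{T_2(S_1)}$ is in fact slightly more careful than the paper, since some of $\hat{f}(S_1)$ or $\hat{h}(S_1)$ may vanish.
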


\iflong
\begin{proof}
By assumption, $F$ is of the form
\begin{equation*}
    F(\xa,\xb) = f(\xa) + g(\xb) + h(\xa\xb).
\end{equation*}
Since $f$, $g$, and $h$ are each of degree at most $k$, each term of $F$
includes at most $k$ elements from each input sequence. Thus, we can apply
\thref{th:subtle}.

For separable functions, we can compute $T_2(S_1)$ and $T_1(S_2)$ directly. We
have
\begin{equation*}
    T_2(S_1) = \left\{\emptyset, S_1\right\}, \quad
    T_1(S_2) = \left\{\emptyset, S_2\right\},
\end{equation*}
Hence, for a given $S\subseteq[n]$, the only (possibly) non-zero coefficients
of $F$ are
\begin{equation*}
    \hat{F}(S,\emptyset) = \hat{f}(S), \quad
    \hat{F}(\emptyset,S) = \hat{g}(S), \quad
    \hat{F}(S,S) = \hat{h}(S).
\end{equation*}
Computing $\tilde{\Sigma}_{1,t}$, we have
\begin{align*}
    \tilde{\Sigma}_{1,t} &= \sum_{S_1\ni t} \abs{T_2(S_1)}
        \sum_{S_2\ni T_2(S_1)} \hat{F}(S_1,S_2)^2 \\
    &= \sum_{S_1\ni t} 2 \left(
        \hat{F}(S_1,\emptyset)^2 + \hat{F}(S_1,S_1)^2\right) \\
    &= \sum_{S_1\ni t} 2 \left(\hat{f}(S_1)^2 + \hat{h}(S_1)^2\right) \\
    &= 2\sum_{S_1\ni t}\hat{f}(S_1)^2 + 2\sum_{S_1\ni t}\hat{h}(S_1)^2 \\
    &= 2\infl{t}{f} + 2\infl{t}{h}.
\end{align*}
Similarly, we have
\begin{equation*}
    \tilde{\Sigma}_{2,t} = 2\infl{t}{g} + 2\infl{t}{h}.
\end{equation*}
A simple application of Cauchy-Schwarz yields
\begin{equation}
\label{eq:crsubtfin}
    \tilde{\Sigma}_{1,t}^2 + \tilde{\Sigma}_{2,t}^2 \leq
        8\infl{t}{f}^2 + 8\infl{t}{g}^2 + 16\infl{t}{h}^2.
\end{equation}
Substituting \eqref{eq:crsubtfin} into the bound of \thref{th:subtle} yields
the desired inequality.
\end{proof}
\fi

\begin{corollary}[Separable BVIP-2]
Let $F$ be an $n$-bivariate multilinear polynomial which is separable into $f$,
$g$, and $h$, each of which is of degree at most $k$. Let $\Xa$, $\Xb$, $\Ya$,
and $\Yb$ be $n$-length sequences of independent random variables satisfying
\thref{hy:11.65}. Assume $\psi:\R\to\R$ is $\C^4$ with $\norm{\psi''''}\leq C$.
Then
\begin{equation*}
    \abs{E_X-E_Y} \leq \frac{C}{6}\cdot 9^{2k} \cdot \sum_{t=1}^n
    \left(\infl{t}{f}^2 + \infl{t}{g}^2 + 2\infl{t}{h}^2\right),
\end{equation*}
where $E_X$ and $E_Y$ are as defined in \thref{th:subtle}.
\end{corollary}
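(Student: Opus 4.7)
The plan is to mirror the proof of \thref{cr:sepsubt}, but starting from \thref{cr:naive} (BVIP-2) instead of \thref{th:subtle} (BVIP-1). Since $f$, $g$, and $h$ each have degree at most $k$, the separable function $F(\xa,\xb) = f(\xa) + g(\xb) + h(\xa\xb)$ has each of its terms involving at most $k$ coordinates from $\xa$ and at most $k$ from $\xb$, so the hypothesis of \thref{cr:naive} is met with the same $k$. The strategy is thus to invoke BVIP-2 and then collapse the generic quantities $\Sigma_{1,t}$ and $\Sigma_{2,t}$ into influences of $f$, $g$, and $h$.

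First, I would identify the bivariate Fourier coefficients of $F$. Since $h$ is evaluated on the elementwise product $\xa\xb$, a term $\hat{h}(S)\prod_{i\in S}x_{1,i}x_{2,i}$ contributes to the bivariate coefficient $\hat{F}(S,S)$. Together with the contributions from $f$ and $g$, this gives $\hat{F}(S,\emptyset)=\hat{f}(S)$, $\hat{F}(\emptyset,S)=\hat{g}(S)$, $\hat{F}(S,S)=\hat{h}(S)$, and $\hat{F}(S_1,S_2)=0$ for any other pair $(S_1,S_2)$. Consequently, for every nonempty $S_1$ the set $T_2(S_1)$ reduces to $\{\emptyset,S_1\}$, and similarly $T_1(S_2)=\{\emptyset,S_2\}$.

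With these identifications in hand, the computation of $\Sigma_{1,t}$ becomes a direct sum of two terms:
\begin{equation*}
    \Sigma_{1,t} = \sum_{S_1\ni t}\left(\hat{F}(S_1,\emptyset)^2+\hat{F}(S_1,S_1)^2\right)
    = \infl{t}{f}+\infl{t}{h},
\end{equation*}
using \thref{def:inf}. By symmetry, $\Sigma_{2,t}=\infl{t}{g}+\infl{t}{h}$. Applying the inequality $(a+b)^2 \leq 2a^2+2b^2$ then gives $\Sigma_{1,t}^2+\Sigma_{2,t}^2\leq 2\infl{t}{f}^2+2\infl{t}{g}^2+4\infl{t}{h}^2$.

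Substituting this upper bound into the right-hand side of \eqref{eq:bvip-2} converts the prefactor $C/12$ into $C/6$ and yields the desired expression exactly. There is no real obstacle: the only conceptual step is recognizing the reduced structure of $T_1$ and $T_2$ for separable functions, and the only arithmetic step is the Cauchy-Schwarz-type inequality on $\Sigma_{1,t}^2+\Sigma_{2,t}^2$. The proof is entirely parallel to \thref{cr:sepsubt}, with the sole substantive difference being the absence of the $\abs{T_2(S_1)}$ and $\abs{T_1(S_2)}$ weights (which in the BVIP-1 version contribute an extra factor of $2$ inside each $\tilde{\Sigma}_{i,t}$ and ultimately produce the $2C/3$ prefactor instead of $C/6$).
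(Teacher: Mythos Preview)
Your proposal is correct and follows essentially the same approach as the paper: identify the nonzero bivariate coefficients of $F$, reduce $T_2(S_1)$ and $T_1(S_2)$ to $\{\emptyset,S_1\}$ and $\{\emptyset,S_2\}$, compute $\Sigma_{1,t}=\infl{t}{f}+\infl{t}{h}$ and $\Sigma_{2,t}=\infl{t}{g}+\infl{t}{h}$, apply $(a+b)^2\le 2a^2+2b^2$, and substitute into \eqref{eq:bvip-2}. The paper's proof is line-for-line the same.
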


\iflong
\begin{proof}
As in the proof of \thref{cr:sepsubt}, we again have
\begin{equation*}
    T_2(S_1) = \left\{\emptyset, S_1\right\}, \quad
    T_1(S_2) = \left\{\emptyset, S_2\right\},
\end{equation*}
with the possibly non-zero coefficients for a given $S\subseteq[n]$ being
\begin{equation*}
    \hat{F}(S,\emptyset) = \hat{f}(S), \quad
    \hat{F}(\emptyset,S) = \hat{g}(S), \quad
    \hat{F}(S,S) = \hat{h}(S).
\end{equation*}
Computing $\Sigma_{1,t}$, we have
\begin{align*}
    \Sigma_{1,t} &= \sum_{S_1\ni t}\sum_{S_2\in T_2(S_1)} \hat{F}(S_1,S_2)^2 \\
    &= \sum_{S_1\ni t}\hat{F}(S_1,\emptyset)^2 + \hat{F}(S_1,S_1)^2 \\
    &= \sum_{S_1\ni t}\hat{f}(S_1)^2 + \sum_{S_1\ni t}\hat{h}(S_1)^2 \\
    &= \infl{t}{f} + \infl{t}{h}.
\end{align*}
Similarly, we have
\begin{equation*}
    \Sigma_{2,t} = \infl{t}{g} + \infl{t}{h}.
\end{equation*}
A simple application of Cauchy-Schwarz yields
\begin{equation}
\label{eq:crnaivfin}
    \Sigma_{1,t}^2 + \Sigma_{2,t}^2 \leq 2\infl{t}{f}^2 + 2\infl{t}{g}^2
        + 4\infl{t}{h}^2.
\end{equation}
Substituting \eqref{eq:crnaivfin} into the bound of \thref{cr:naive} yields the
desired inequality.
\end{proof}
\fi

Clearly, for separable functions BVIP-1 yields a bound which is asymptotically
tighter than that of BVIP-2 by a factor of $9^k$. This is due to the fact that
$\abs{T_2(S_1)}$ and $\abs{T_1(S_2)}$ are constants for the case of separable
functions. Note that this is not a general phenomenon: we can define functions
such that $\abs{T_2(S_1)},\abs{T_1(S_2)} \geq 9^k$, in which case BVIP-2 would
provide a tighter bound.  Nonetheless, for bivariate functions in which the
interaction between inputs is not too strong or for functions of high degree,
BVIP-1 will be tighter than the naive baseline of BVIP-2.

The fact that BVIP-1 is looser than BVIP-2 for some functions is evidence that
our analysis is not perfect. It is left to future work to investigate and
quantify the effect of the maximum degree and the interaction of the two inputs
on the relative performance of these invariance principles. Furthermore, it is
possible that other methods for proving the BIP would naturally lead to other
bivariate invariance principles which may further elucidate this tradeoff or
reveal new aspects of the problem. Finally, we also note that the bivariate
method in this paper could potentially be extended to address multivariate,
multilinear polynomials.

\bibliographystyle{IEEEtran}
\bibliography{IEEEabrv,itw-22}

\end{document}